 \DeclareMathOperator{\rank}{rank}
\begin{document}
\title{On the Relation  Between the Index Coding and the Network Coding Problems }
%\author{sgsfdgsdfgsd, asdgfasdgsad\\ asdfasdgasd adfgasgsadg}

\author{
\authorblockN{Salim El Rouayheb, Alex Sprintson, and  Costas Georghiades}
\authorblockA{Department of Electrical and Computer Engineering \\
Texas A\&M University, College Station, TX 77843 \\
\{salim, spalex, c-georghiades\}@ece.tamu.edu}}

%\authorblockA{Computer Science Division \\
%The Open University of Israel\\
%Raanana, 43107, Israel\\
%\{salim\}@ece.tamu.edu} \and

%\authorblockN{Alex Sprintson}
%\authorblockA{Department of Electrical and Computer Engineering \\
%Texas A\&M University\\
%College Station, Texas, 77843 \\
%spalex@ece.tamu.edu} }

%
%\author{
%\authorblockN{asdasf afasfsf }
%\authorblockA{Department of Electrical and Computer Engineering \\
%Texas A\&M University, College Station, Texas \\
%email:\{adfafg\}@ece.tamu.edu} }

\date{Draft of \today}
\maketitle
\newtheorem{theorem}{Theorem}%[section]
\newtheorem{lemma}[theorem]{Lemma}
\newtheorem{claim}[theorem]{Claim}
\newtheorem{proposition}[theorem]{Proposition}%[section]
\newtheorem{condition}[theorem]{Condition}%[section]
\newtheorem{observation}[theorem]{Observation}%[section]
\newtheorem{conjecture}[theorem]{Conjecture}%[section]
\newtheorem{property}[theorem]{Property}%[section]4
\newtheorem{assertion}[theorem]{Assertion}%[section]
\newtheorem{example}[theorem]{Example}%[section]
\newtheorem{definition}[theorem]{Definition}%[section]
\newtheorem{corollary}[theorem]{Corollary}%[section]
\newtheorem{remark}[theorem]{Remark}%[section]
\newtheorem{note}[theorem]{Note}
\newtheorem{problem}[theorem]{Problem}

\begin{abstract}
In this paper we show that the Index Coding problem captures several
important properties of the more general Network Coding problem. An
instance of the Index Coding problem includes a server that holds a
set of information messages $X=\{x_1,\dots,x_k\}$ and a set of
receivers $R$. Each receiver has some side information, known to the
server, represented by a subset of $X$ and demands another subset of
$X$. The server uses a noiseless communication channel to broadcast
encodings of messages in $X$ to satisfy the receivers' demands. The
goal of the server is to find an encoding scheme that requires the
minimum number of transmissions.

We show that any instance of the  Network Coding problem can be efficiently
reduced to an instance of the Index Coding problem. Our reduction shows that
several important properties of the Network Coding problem carry over to the
Index Coding problem. In particular, we prove that both scalar linear and
vector linear codes are insufficient for achieving the minimal number of
transmissions.
%Consider the following communication scenario. A server has a set of
%information messages $X=\{x_1,\dots,x_k\}$, and can use an error-free
%channel to broadcast information to a set of receivers. Each
%receiver has some side information represented by a subset of $X$
%known to the server, and demands another subset. The index coding
%problem consists of determining the optimal encoding scheme that can
%be implemented by the server in order to satisfy without errors the
%demands of each receiver.
%
% We adopt in this paper a network coding
%approach to the index coding problem, and construct a reduction
%between the two problems. In consequence, we are able to apply many
%known results in network coding to prove their counterpart for index
%coding. In this way, we show that scalar linear index coding, i.e.
%repeated  encoding on single messages, is not optimal and can be
%outperformed by linear coding on blocks of length larger than 1.
%Moreover, we describe an instance of the index coding problem where
%linear encoding is suboptimal, providing, thus, another disproof of
%a related conjecture.
\end{abstract}

\section{Introduction}

Since its introduction by the seminal paper of Ahlswede et al.
\cite{ACLY00}, the network coding paradigm has received a
significant interest from the research community (see e.g.,
\cite{FS07, YLC06} and references therein). Network coding extends
the functionality of the intermediate network nodes from merely
copying and forwarding their received messages to combining the information content of several incoming messages and forwarding the result over the outgoing edges. The network coding approach was shown to produce substantial gain over the traditional approach of routing and tree packing in many scenarios.
% alex-old: add citations

%
%\begin{figure}[h]
%\begin{center}
%\includegraphics[scale=0.3]{Butterfly.eps}
% \caption{An example of the index coding problem.}
%\label{fig:butterfly}
%\end{center}
%\end{figure}

The Index Coding problem has been recently introduced in \cite{BBJK06} and has
been the subject of several studies \cite{WPCPC061,RCS07,LS07}. %alex-old add citations
An instance of the Index Coding problem includes a
server/transmitter that holds a set of information messages $X$ and
a set of receivers $R$, each one of them has some side information
represented by a subset of $X$, known to the server, and demands
another subset of $X$. The server can broadcast  encodings of
messages in $X$ over a noiseless channel. The objective is to
identify an encoding scheme that satisfies the demands of all clients with the minimum number of transmissions.

% alex - removed without any errors - since the channel is noiseless - we cannot have any errors.

Figure \ref{fig:butterfly} depicts an instance of  the Index Coding problem
that includes a  server with four messages $x_1,\dots, x_4\in\{0,1\}$ and four clients.
For each client, we show the set of messages it has (side information), and the set of messages it wants (demands). Note that the server can always satisfy the demands of the clients by
sending all the messages. However, this solution is suboptimal since it is
sufficient for the server to broadcast the two messages $x_1+x_2+x_3$ and
$x_1+x_4$ (all operations are over $GF(2)$).
% alex change
%, where $+$ denotes the ``exclusive or'' (xor) operation.
This shows that by using an efficient encoding scheme, the server can significantly reduce the number of transmissions, and, in turn, reduce the
%alex: changed "minimize" to "reduce"
delay and the energy consumption.

%alex-old discuss linear and scalar-linear solutions.

In general, each message can be divided into several \emph{packets} and the
encoding scheme can combine packets from different messages to minimize the
number of transmissions. With \emph{linear} index coding, all packets are
considered to be elements of a certain finite field $\mathbb{F}$ and each transmitted packet is a linear combination of the packets  corresponding to the original messages in $X$.
%alex: changed  "of the original packets  corresponding to the  messages" to "of the packets  corresponding to the original messages"
The linear solutions can be further classified into \emph{scalar linear} and
\emph{vector linear}. With a scalar linear solution, each message corresponds
to exactly one packet, while with a vector linear solution each message can be
divided into several packets. Note that the example shown in Figure
\ref{fig:butterfly} uses a scalar linear solution over $\mathbb{F}=GF(2)$.

The Index Coding problem was studied from an information theoretical
perspective in \cite{WPCPC061}. The authors of \cite{BBJK06} established lower
and upper bounds on the minimum number of transmissions based on the properties
of a certain related graph. References \cite{RCS07} and \cite{CS08} present several heuristic solutions for this problem. In addition, the authors of
\cite{LS07} showed the suboptimality of scalar linear encoding schemes, which
disproves the conjecture of \cite{BBJK06}.
%alex-old we need to be more careful here

\begin{figure}[t]%
%\begin{minipage}{0.97\textwidth}
\begin{center}%
\epsfig{file=./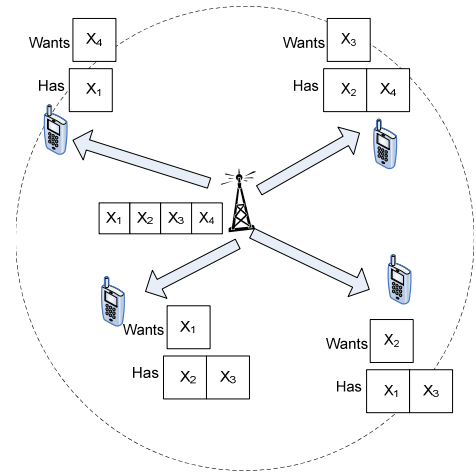, scale=0.3}
\end{center}
%\end{minipage}
 \caption{An instance of the Index Coding problem.}
\label{fig:butterfly}
\end{figure}

\subsection*{Contributions}

Index coding can be seen as a special case of the Network Coding
problem. In this paper, we show that, nevertheless, several important
properties of the more general Network Coding problem carry over to the Index
Coding problem. To that end, we present a reduction that maps any instance of
the Network Coding problem to a corresponding instance of the Index Coding
problem. We use this reduction to establish several fundamental properties of
the Index Coding problem.

First, we show that a scalar linear solution may require more
transmissions than a vector linear one. In particular, we show two
instances of the Index Coding problem  in which a vector linear
solution that divides each message into two packets yields a smaller
number of transmissions than a scalar linear solution.

Second, we show that even vector linear solutions for the Index Coding problem
are insufficient for achieving the minimal number of transmissions.  In
particular, we use our reduction and the construction presented in \cite{DFZ05}
to show an instance of the Index Coding problem for which a non-linear solution
requires a lower number of transmissions than the linear one.

\section{Model}
%In this section, we provide a formal description of the Network
%Coding and Index Coding problems.
\subsection{Network Coding}
Let $G(V,E)$ be a graph with vertex set  $V$  and edge set $E$. For each edge
$e(u,v)\in E$, we define the in-degree of $e$ to be the in-degree of its tail
node $u$. Similarly, we define the out-degree of $e$ to be the out-degree of
its head node $v$. Let $S\subset E$ be the subset of edges in $E$ of zero
in-degree and let $D\subset E$ be the subset of edges in $E$ of zero out-degree.
We refer to edges in $S$ and $D$ as \emph{input} and \emph{output} edges,
respectively. We denote $m=|E|$, $k=|S|$, $d=|D|$, and assume that the edges in
$E$ are indexed such that  $S =\{e_1,\dots,e_k\}$ and
$D=\{e_{m-d+1},\dots,e_m\}$. Also, for each edge  \mbox{$e=(u,v)\in E$}, we
define $\mathcal{P}(e)$ to be the set of the parent edges of $e$, i.e.,
\mbox{$\mathcal{P}(e)=\{(w,u);\ (w,u)\in E)\}$}.

We represent a communication network by a 3-tuple $\mathbb{N}(G(V,E),X,\delta)$
defined by an acyclic graph $G(V,E)$, a message set $X=\{x_1,\dots,x_k\}$, and
an onto function \mbox{$\delta: D\longrightarrow X$} from the set of output
edges to the set of messages. Each message $x_i\in X$ consists of a vector of $n$
packets $x_i=(x_{i1},\dots,x_{in})$.% In some parts of the paper we will abuse
%notation and write  $X=(x_1,\dots, x_k)$.

We assume that the message $x_i$ is  available at the tail node of
the input edge $e_i$. The function $\delta$, referred to as
%alex-old you had the head node - I changed it to the tail node and hope that this is OK
the \emph{demand} function, represents, for each output edge $e_i\in
D$, the message demanded by its head node.
%alex-old - the same. I think it should be the tail node?

%We represent each network $\mathbb{N}$ graphically in the fashion of Figure
%\ref{fig:butterfly}, where each edge $e_i\in S$ is labeled
% by $x_i$ and each edge  $e\in D$ is labeled by its demanded source
% $\delta(e)$.
%alex-old - I have removed the text above - it is not very clear.

\begin{definition}[Network Code]
Let $\mathbb{N}(G(V,E),X,\delta)$ be an instance of the Network Coding problem
with $k=|X|$ messages, each message is a vector of $n$ packets,
$x_i=(x_{i1},\dots,x_{in}) \in \Sigma^n$, where $\Sigma=\{0,\dots,q-1\}$ is a
$q$-ary alphabet. Then, an $(n,q)$ \emph{network code} of block length $n$ is a
collection $$C=\{f_e=(f_e^1,\dots,f_e^n); f_e^i:(\Sigma^n)^k\longrightarrow
\Sigma, e\in E, 1\leq i\leq n\}$$
%alex: changed the equation above
of  \emph{global encoding functions}, indexed by the edges of $G$, that satisfy the following conditions:
 \begin{enumerate}
   \item [(N1)] $f_{e_i}(X)=x_i$ for $i=1,\dots,k$;
   \item [(N2)]$f_{e_i}(X)=\delta(e_i)$ for $i=m-d+1,\dots,m$;
   \item [(N3)] For each $e=(u,v)\in E\setminus S$ with $\mathcal{P}(e)=\{e_1,\dots,e_
   {p_e}\}$, there exists a function \mbox{$\phi_e:(\Sigma^n)^{p_e}\longrightarrow
       \Sigma^n$},  referred to as the \emph{local encoding function} of $e$, such that \mbox{$f_e(X)=\phi_{e}(f_{e_1}(X),\dots,f_{e_{p_e}}(X))$},
       where $p_e$ is the in-degree of $e$ and $\mathcal{P}(e)$ is the set
       of parent edges of $e$.
 \end{enumerate}
\end{definition}

%The functions $f_e$ are usually called the global encoding functions, and the
%$\phi_e$'s the local encoding functions of the network code.
%alex-old - removed to save space.

If $n=1$, the network code is called a scalar network code,
otherwise (if $n>1$) it is called a vector or a block network code.
If  $\Sigma$ is a certain finite field $\mathbb{F}$, and all the global and local encoding functions are
%functions $f_e, e\in E$, are   linear in variables $x_{ij}$,
linear functions of the packets, the network code is called linear over
$\mathbb{F}$.
%alex-old - I changed f_e to \phi_e in the definition of linearity, but I am not sure
%salim- the \phi's are not a direct function of the X's

\subsection{Index Coding}
An instance of the Index Coding problem $\mathcal{I}(X,R)$ includes
\begin{enumerate}
  \item A set of messages  $X=\{x_1,\dots,x_k\}$;
  \item A set of clients $R \subseteq\{(x,H); x\in X, H\subseteq
      X\setminus\{x\} \}$.
\end{enumerate}
Here, $X$ represents the set of messages available at the server. A
client is represented by a pair $(x,H)\in R$, where $x$ is the
message required by the client, and $H\subseteq X$ is set of
messages available to the client as side information.
% We refer to $H$ as the ``has'' set of a client $(x,H)\in R$.
We assume, without loss of generality, that each client needs exactly one message.

As in the Network Coding problem, each message $x_i\in X$ is divided into  $n$
packets $x_i=(x_{i1},\dots,x_{in})$. We refer to parameter $n$ as the block length of
the index code.

\begin{definition}[Index Code]\label{def:indexcoding}
Let $\mathcal{I}(X,R)$ be an instance of the Index Coding problem with $k=|X|$
messages, each message $x_i$ is a vector of $n$ packets, $(x_{i1},\dots,x_{in})
\in \Sigma^n$, where $\Sigma=\{0,\dots,q-1\}$ is a $q$-ary alphabet. Then, an
optimal $(n,q)$ index code for $\mathcal{I}(X,R)$ is a function
$f:(\Sigma^n)^k\longrightarrow \Sigma^\ell$, such that
\begin{itemize}
\item [(I1)]for each client $r=(x,H)\in R$, there exists a function
    $\psi_{r}: \Sigma^{\ell+n|H|}\longrightarrow \Sigma^n$ such that
    \mbox{$\psi_{r}(f(x_1,\dots,x_k),x_i|_{i\in H})=x$},
\item [(I2)] $\ell=\ell(n,q)$ is the smallest integer such that (I1) holds
    for the given  $q$-ary alphabet and block length $n$.
\end{itemize}
\end{definition}

We refer to  $\psi_r$ as the decoding function for client $r$. With a linear
index code, the alphabet $\Sigma$ is a field and the functions $f$ and $\psi_r$
are linear in  variables $x_{ij}$. Similarly, if $n=1$ the index code is
called a scalar code and for $n>1$ it is called a vector or block code.

Our formulation of the Index Coding problem here differs from that of
\cite{BBJK06} and \cite{LS07} in two aspects. First, the model of \cite{BBJK06}
and \cite{LS07} assumes that for each message in $X$ there is exactly one
client that requests it. Our model does not make this assumption. Second, and more importantly,  \cite{BBJK06} and \cite{LS07} focus on scalar linear codes (vector linear codes are mentioned in the conclusion of \cite{LS07}), whereas we consider the more general case of vector linear codes.

Let $\mathcal{I}(X,R)$ be an instance of the Index Coding problem.
We define by $\lambda(n,q)= \ell (n,q)/n$ the transmission rate of
the optimal solution over an alphabet of size $q$. We also denote by
$\lambda^*(n,q)$ the minimum rate achieved by a vector linear solution
over a finite field $\mathbb{F}_q$. We are interested in the
behavior of $\lambda$ and $\lambda^*$ as  functions of $n$ and $q$.

Let $\mu(\mathcal{I})$ be the largest set of messages requested by a collection
of clients with identical ``has'' sets, i.e.,
$\mu(\mathcal{I})=\max_{I\subseteq X}|\{x_i;\ (x_i,I)\in R\}|$. It is easy to
verify that the optimal rate $\lambda(n,q)$ is lower bounded by
$\mu(\mathcal{I})$, independently of the values of $n$ and $q$.

\begin{lemma}\label{lem:bound}
For any instance $\mathcal{I}(X,R)$ of the Index Coding problem it
holds that
$$\lambda(n,q)\geq \mu(\mathcal{I}).$$
\end{lemma}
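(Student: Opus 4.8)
The plan is to prove the bound by a pigeonhole (counting) argument on the set $\Sigma^\ell$ of possible transmitted words. First I would unpack the definition of $\mu(\mathcal{I})$: choose a subset $I\subseteq X$ attaining the maximum, so that $R$ contains $\mu:=\mu(\mathcal{I})$ clients $r_1=(x_{i_1},I),\dots,r_\mu=(x_{i_\mu},I)$ that all share the same side‑information set $I$ and demand the pairwise distinct messages $x_{i_1},\dots,x_{i_\mu}$. By the definition of a client, none of these demanded messages lies in $I$.

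Next I would fix, once and for all, arbitrary values for every message $x_j$ with $j\notin\{i_1,\dots,i_\mu\}$; in particular this fixes the side information $x_i|_{i\in I}$ common to $r_1,\dots,r_\mu$. Letting the $\mu$ demanded messages range freely over $(\Sigma^n)^\mu$, consider the map that sends a choice of $(x_{i_1},\dots,x_{i_\mu})$ to the transmitted word $f(x_1,\dots,x_k)\in\Sigma^\ell$. The key claim is that this map is injective: if two choices produced the same transmission, then, since $x_i|_{i\in I}$ is identical in both cases, property (I1) forces the decoder $\psi_{r_j}$ (which sees only the transmission and $x_i|_{i\in I}$) to output the same value of $x_{i_j}$ for both choices, for every $j=1,\dots,\mu$; hence the two choices of $(x_{i_1},\dots,x_{i_\mu})$ coincide.

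Injectivity yields $|\Sigma^\ell|\ge |(\Sigma^n)^\mu|$, i.e.\ $q^\ell\ge q^{n\mu}$, so $\ell\ge n\mu$ and therefore $\lambda(n,q)=\ell/n\ge\mu=\mu(\mathcal{I})$; since the argument never uses anything about $n$ or $q$ beyond $q\ge 2$, the inequality holds uniformly in both. I do not expect a real obstacle here — the only point needing care is the bookkeeping that $x_{i_1},\dots,x_{i_\mu}$ are genuinely free variables (distinct messages, none of them among the fixed side‑information block $I$), so that the domain of the injection really does have size $q^{n\mu}$.
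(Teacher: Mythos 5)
Your argument is correct: the paper itself omits the proof of this lemma (dismissing it as ``easy to verify''), and your injectivity/pigeonhole argument is exactly the standard verification it has in mind --- the $\mu(\mathcal{I})$ clients share the side information $I$, so two assignments of their (distinct, non-side-information) demanded messages producing the same transmission would be indistinguishable to the decoders $\psi_{r_j}$, contradicting (I1); hence $q^{\ell}\geq q^{n\mu}$ and $\lambda(n,q)=\ell/n\geq\mu(\mathcal{I})$. Your bookkeeping point (that the demanded messages are pairwise distinct and lie outside $I$ by the definition of a client) is indeed the only detail needing care, and you handle it correctly.
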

% alex-old: we need to say something about the proof.

\section{Main Result}\label{sec:connec}

%In this section, we show that, when restricted to the linear case, the problem
%of index coding is at least as hard as the network coding problem.
%alex-old - I removed this sentence because it is not clear what you mean by "hard"

In this section we present a reduction from the Network Coding problem to the
Index Coding problem. Specifically, for each instance
$\mathbb{N}(G(V,E),X,\delta)$ of the Network Coding problem, we construct a
corresponding instance $\mathcal{I}_{\mathbb{N}}$ of the Index Coding problem
such that $\mathcal{I}_{\mathbb{N}}$ has an $(n,q)$ index code of rate
$\lambda^*(n,q)=\lambda(n,q)=\mu(\mathcal{I}_{\mathbb{N}})$ if and only if
there exists an $(n,q)$ linear network for $\mathbb{N}$.

\begin{definition}\label{def:1}
Let $\mathbb{N}(G(V,E),X,\delta)$  be an instance of the Network Coding
problem. We construct an instance $\mathcal{I}_{\mathbb{N}}(Y,R)$ of the Index
Coding problem as follows:
\begin{enumerate}
  \item The set of messages $Y$ includes a message for each edge $e\in E$
      and the messages $x_i\in X$, i.e., \mbox{$Y=\{y_1,\dots,y_m\}\cup
      X$};
  \item The set of clients $R$ is a union of $R_1,\dots,R_5$ defined as
      follows:
  \begin{enumerate}
  \item  $R_1=\{ (x_i,\{ y_i\}); e_i\in S\}$
  \item  $R_2=\{ (y_i,\{ x_i\});e_i\in S\}$
  \item $R_3=\{(y_i, \{ y_j; e_j\in\mathcal{P}(e_i)\}); e_i\in
      E\setminus S\}$
  \item $R_4=\{(\delta(e_i),\{y_i\}); e_i\in D\}$
  \item $R_5=\{ (y_i, X); i=1,\dots,m \}$
  \end{enumerate}

\end{enumerate}
\end{definition}

It is easy to verify that for instance $\mathcal{I}_{\mathbb{N}}(Y,R)$ it holds
that $\mu(\mathcal{I}_{\mathbb{N}})=m$.

\begin{theorem}\label{th:equivalence}
Let $\mathbb{N}(G(V,E),X,\delta)$ be an instance of the Network Coding problem
and let $\mathcal{I}_{\mathbb{N}}(Y,R)$ be the corresponding instance of the Index Coding problem as
defined above. Then, there exists a linear $(n,q)$ index code  for
$\mathcal{I}_{\mathbb{N}}$ with $\lambda^*(n,q)=\mu(\mathcal{I}_{\mathbb{N}})$ iff there exists a linear
$(n,q)$ network code for $\mathbb{N}$.
\end{theorem}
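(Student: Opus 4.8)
The plan is to establish the two directions of the equivalence by translating a linear network code into a linear index code and back. First, let me record the structure that the construction forces. Since $\mu(\mathcal{I}_{\mathbb{N}})=m$ (there are $m$ clients of the form $(y_i,X)$, all with identical ``has'' set $X$), Lemma~\ref{lem:bound} gives $\lambda(n,q)\ge m$ for every $n,q$; hence ``$\lambda^*(n,q)=\mu(\mathcal{I}_{\mathbb{N}})$'' means the index code uses exactly $\ell=mn$ symbols, i.e.\ it is as short as conceivably possible. So in both directions the object to produce or exploit is a linear map $f:(\Sigma^n)^{m+k}\to\Sigma^{mn}$ meeting (I1) for all clients in $R_1,\dots,R_5$.

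For the forward direction (network code $\Rightarrow$ index code), suppose $C=\{f_e\}$ is a linear $(n,q)$ network code for $\mathbb{N}$, with each $f_e$ expressed as a linear function of the source messages $x_1,\dots,x_k$. I would define the index code to transmit the $m$ blocks $z_i := y_i - f_{e_i}(x_1,\dots,x_k)$ for $i=1,\dots,m$ (equivalently $y_i + f_{e_i}(X)$ over a field); this is a linear function of the $(m+k)n$ input symbols and has length exactly $mn$, so (I2) is automatic once (I1) is verified. Now I would check decodability client by client: clients in $R_5$ know all of $X$, hence can compute every $f_{e_i}(X)$ and recover each $y_i=z_i+f_{e_i}(X)$ — in fact these are what pin down $\ell=mn$. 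A client $(y_i,\{x_i\})$ in $R_2$ with $e_i\in S$ uses $f_{e_i}(X)=x_i$ (condition (N1)) to get $y_i=z_i+x_i$. A client $(x_i,\{y_i\})$ in $R_1$ inverts this: $x_i=f_{e_i}(X)=y_i-z_i$. A client $(y_i,\{y_j:e_j\in\mathcal{P}(e_i)\})$ in $R_3$ first recovers each parent $f_{e_j}(X)=y_j-z_j$, then applies the local encoding function $\phi_{e_i}$ from (N3) to obtain $f_{e_i}(X)$, then $y_i=z_i+f_{e_i}(X)$. A client $(\delta(e_i),\{y_i\})$ in $R_4$ with $e_i\in D$ recovers $f_{e_i}(X)=y_i-z_i=\delta(e_i)$ by (N2). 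This handles all five families.

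For the reverse direction (index code $\Rightarrow$ network code), suppose $f$ is a linear $(n,q)$ index code with $\ell=mn$. The key observation is that the $R_5$ clients force $f$ to be, up to an invertible transformation, exactly the ``$y_i$ masked by a function of $X$'' form: each $(y_i,X)$ client recovers $y_i$ from $f(\cdot)$ and $X$ linearly, and since the total output length $mn$ equals the total length of $y_1,\dots,y_m$, the map $(y_1,\dots,y_m)\mapsto f$ (with $X$ fixed) must be a linear bijection; composing the decoder appropriately, we may assume WLOG that the transmitted symbols are $z_i=y_i-g_i(X)$ for some linear maps $g_i$. I would then define the candidate network code by $f_{e_i}(X):=g_i(X)$ and verify (N1)–(N3) by reading off what the remaining clients' decoders guarantee: $R_2$ and $R_1$ clients together force $g_i(X)=x_i$ for $e_i\in S$, giving (N1); $R_4$ clients force $g_i(X)=\delta(e_i)$ for $e_i\in D$, giving (N2); and $R_3$ clients — which recover $y_i$ (hence $g_i(X)$, after subtracting $z_i$) from the $z_j$ with $e_j\in\mathcal{P}(e_i)$ and the transmission — force $g_i(X)$ to be a linear function of $\{g_j(X):e_j\in\mathcal{P}(e_i)\}$, which is exactly the existence of the local encoding function $\phi_{e_i}$ in (N3). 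Acyclicity of $G$ ensures these local functions compose consistently into a genuine network code.

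The main obstacle is the ``WLOG'' step in the reverse direction: one must argue carefully that an arbitrary length-$mn$ linear index code can be put into the normalized form $z_i=y_i-g_i(X)$ without loss of generality. The argument is that the $R_5$ constraints say the composite map $\Sigma^{mn}\times(\text{decoder})$ recovers all of $y_1,\dots,y_m$, and a dimension count (output dimension $mn$ equals the dimension of the $y$-block) shows the portion of $f$ depending on the $y$'s is a bijection; post-composing $f$ with the inverse of this bijection is a legitimate change of code that does not affect any client's ability to decode, and turns the code into the stated form. Once this normalization is in place, verifying (N1)–(N3) from the $R_1$–$R_4$ decoders is routine linear bookkeeping, entirely parallel to the forward direction.
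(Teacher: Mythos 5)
Your proposal follows essentially the same route as the paper's proof: for the forward direction you transmit each $y_i$ masked by $f_{e_i}(X)$ and verify the five client families exactly as the paper does, and for the converse you use the $R_5$ clients plus the dimension count $\ell=mn$ to conclude the $y$-part of the linear code is invertible, normalize (the paper's multiplication by $M^{-1}$) to the form $z_i=y_i+g_i(X)$, and then read (N1)--(N3) off the $R_1$--$R_4$ decoders. The only caveat is a small imprecision: the $R_1,R_2$ (resp.\ $R_4$) decoders force $g_i(X)=x_iC_{ii}$ (resp.\ $\delta(e_i)C_{ii}$) only up to an invertible matrix $C_{ii}$, so one should define $f_{e_i}:=x_i$ (resp.\ $\delta(e_i)$) rather than $g_i$ and absorb the invertible factors into the local encoding functions, which is exactly the bookkeeping the paper performs.
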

\begin{proof}
Suppose there is a linear $(n,q)$ network code $ C=\{f_e(X);
f_e:(\mathbb{F}_q^n)^k\rightarrow \mathbb{F}_q^n, e\in E\}$ for $\mathbb{N}$ over the finite field $\mathbb{F}_q$ for some integer $n$.

Define $g:(\mathbb{F}_q^n)^{m+k}\rightarrow (\mathbb{F}_q^n)^m$ such that $\forall
Z=(x_1,\dots,x_k,y_1,\dots,y_m)\in (\mathbb{F}_q^n)^m, g(Z)=(g_1(Z),\dots,g_m(Z))$
with

\begin{align*}
 g_i(Z)&=y_i+x_i\quad  &i=1,\dots,k,\\
g_i(Z)&=y_i+f_{e_i}(X) \quad &i=k+1,\dots,m,
\end{align*}
%Where "+" designate addition modulo $q$.

Next, we show that $g(Z)$ is in fact an index code for
$\mathcal{I}_{\mathbb{N}}$ by proving the existence of the decoding functions
$\psi_r$. We consider five cases:
\begin{enumerate}
  \item $\forall r=(x_i,\{y_i\})\in R_1, \psi_r=g_i(Z)-y_i$,
  \item $\forall r=(y_i,\{x_i\})\in R_2, \psi_r=g_i(Z)-x_i$,
  \item $\forall r=(y_i, \{y_{i_1},\dots, y_{i_p}\})\in R_3$, since $C$ is
      a linear network code for $\mathbb{N}$, there exists a linear function $\phi_{e_i}$
      such that $f_{e_i}(X)=\phi_{e_i}(f_{e_{i_1}}(X),\dots,
      f_{e_{i_p}}(X))$. Thus,
      \mbox{$\psi_r=g_i(Z)-\phi_{e_i}(g_{{i_1}}(Z)-y_{i_1},\dots,g_{{i_p}}(Z)-y_{i_p})$},
   \item $\forall r=(\delta(e_i),\{y_i\})\in R_4, e_i\in D,
       \psi_r=g_i(Z)-y_i$,
   \item $\forall r=(y_i,X)\in R_5,\psi_r=g_i(Z)-f_{e_{i}}(X)$.
\end{enumerate}

Now assume that $g: (\mathbb{F}_q^n)^{m+k} \longrightarrow (\mathbb{F}_q^n)^m$ is a linear
$(n,q)$ index code for $I_{\mathbb{N}}$ over the field $\mathbb{F}_q$, such that
$\forall Z=(x_1,\dots,x_k,y_1,\dots,y_m)\in (\mathbb{F}_q^n)^{m+k},$
 $g(Z)=(g_1(Z),\dots,g_m(Z)), $ $x_i,y_i ,g_i(Z)\in \mathbb{F}_q^n.$ We
 write
 $$g_i(Z)=\sum_{j=1}^{k}x_jA_{ij}+\sum_{j=1}^m y_jB_{ij},$$
where  $i=1,\dots, m$ and $A_{ij}, B_{ij}\in M_{\mathbb{F}_q}(n,n)$ are sets of $n\times n$ matrices with elements in $\mathbb{F}_q$.

The functions $\psi_r$  exist for all $r\in R_5$ iff the matrix
$M=[B_{ij}]\in  M_{\mathbb{F}_q}(nm,nm)$, which has  the matrix $B_{ij}$ as a
block submatrix in the $(i,j)$th position,  is invertible. Define $h:
(\mathbb{F}_q^n)^{m+k} \longrightarrow (\mathbb{F}_q^n)^m$, such that $h(Z)=g(Z)M^{-1},
\forall Z\in (\mathbb{F}_q^n)^{m+k} $. So, we obtain
$$h_i(Z)=y_i+\sum_{j=1}^kx_j C_{ij}, i=1,\dots, m,$$
where $[C_{ij}]\in  M_{\mathbb{F}_q}(n,n)$.
We note this $h(Z)$ is a valid index code for $\mathcal{I}_{\mathbb{N}}$.  In
fact, $\forall r=(x,H)\in R$ with $\psi_r(g,x|_{x\in H})=x$,
$\psi_r^\prime=(h,x|_{x\in H})=\psi_r(hM,x|_{x\in H}))$ is a valid decoding
function corresponding to the client $r$ and the index code $h(Z)$.

For all $r\in\ R_1\cup R_4$, $ \psi_r^\prime$ exists  iff for
$i=1,\dots,k,m-d+1,\dots, m$ and $j\neq i$ it holds that  $C_{ij}=[0]\in
M_{\mathbb{F}_q}(n,n)$ and $C_{ii}$ is invertible, where $[0]$ denotes the all
zeros matrix. This implies that
\begin{equation}\label{eq:indexcod}
\begin{split}
 h_i(Z)&=y_i+ x_iC_{ii}, i=1,\dots, k\\
    h_i(Z)&=y_i+\sum_{j=1}^k x_j C_{ij}, i=k+1,\dots, m-d\\
    h_i(Z)&=y_i+ \delta(e_i)C_{ii}, i=m-d+1,\dots,m
\end{split}
\end{equation}
Next, we define the functions $f_{e_i}:(\mathbb{F}_q^n)^k \longrightarrow \mathbb{F}_q^n,
e_i\in E$ as follows:
\begin{enumerate}
\item $f_{e_i}(X)=x_i$, for $ i=1,\dots,k$
\item $f_{e_i}(X)=\sum_{j=1}^k x_jC_{ij}$, for $ i=k+1,\dots,m-d$
\item $f_{e_i}(X)=\delta(e_i)$, for $ i=m-d+1,\dots, m$.
\end{enumerate}

Then $C=\{f_{e_i}; e_i\in E\}$ is a linear $(n,q)$ network code for
$\mathbb{N}$. To show that it suffices to prove that condition N3 holds.

Let $e_i$ be an edge in $E\setminus S$ with the set
$\mathcal{P}(e_i)=\{e_{i_1},\dots,e_{i_p}\}$ of parent edges. We denote by
$I_i=\{i_1,\dots,i_p\}$
%, $I^\prime_i=I\cap\{1,\dots,k\} $,
and $r_i=(y_i,\{y_{i_1},\dots,y_{i_p}\})\in R_3$. Then, there is a linear function
$\psi_{r_i}^\prime$ such that $y_i= \psi_{r_i}^\prime
(h_1,\dots,h_m,y_{i_1},\dots, y_{i_p})$. Hence, there exist $T_{ij},
T^\prime_{i\alpha}\in M_{\mathbb{F}_q}(n,n)$ such that
$$y_i=\sum_{j=1}^mh_j T_{ij}+\sum_{\alpha\in I_i}y_\alpha T^\prime_{i\alpha}$$
Using Eq.~\eqref{eq:indexcod}, we get that $T_{ii}$ is the  identity
matrix, $T^\prime_{i\alpha}=-T_{i\alpha}\forall \alpha\in I_i$,
  $T_{ij}=[0]\ \forall j\notin I_i\cup\{i\}$.
 % \begin{align*}
%    \sum_{h=1}^kx_{h}C_{ih}&=-\sum_{j\in I^\prime_i} x_jC_{jj}T_{ij}
%  -\sum_{j\in I_i\setminus I_i^\prime}(\sum_{h=1}^kx_hC_{jh})T_{ij},
%  \text{ if } e_i\notin D\\
% \delta(e_i)C_{ii}&=-\sum_{j\in I^\prime_i} x_jC_{jj}T_{ij}
%  -\sum_{j\in I_i\setminus I_i^\prime}(\sum_{h=1}^kx_hC_{jh})T_{ij}\text{ if } e_i\in
%  D.
%  \end{align*}
  Therefore, $$f_{e_i}=-\sum_{\alpha\in I_i} f_{e_\alpha  }T_{i\alpha}, \forall e_i\in E\setminus S,$$ and $C$ is a feasible network code for $\mathbb{N}$.
\end{proof}

\begin{lemma}\label{ref:nonlinear}
Let $\mathbb{N}(G(V,E),X,\delta)$ be an instance of the Network Coding problem
and let $\mathcal{I}_{\mathbb{N}}(Y,R)$ be the corresponding index problem. If there is an
$(n,q)$ network code (not necessarily linear)
 for $\mathbb{N}$, then there is a $(n,q)$ index code
for $\mathcal{I}_{\mathbb{N}}$ with
\mbox{$\lambda(n,q)=\mu(\mathcal{I}_{\mathbb{N}})=m$}, where $m=|E|$.
\end{lemma}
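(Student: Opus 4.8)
The plan is to replay the forward direction (network code $\Rightarrow$ index code) of the proof of Theorem~\ref{th:equivalence}, removing every appeal to linearity. Since the alphabet $\Sigma=\{0,\dots,q-1\}$ carries the abelian group structure of $\mathbb{Z}_q$, extended coordinatewise to $\Sigma^n$, I would fix that group once and for all and write $\oplus,\ominus$ for its addition and subtraction on $\Sigma^n$. Given an arbitrary (possibly non-linear) $(n,q)$ network code $C=\{f_e:e\in E\}$ for $\mathbb{N}$, define $g:(\Sigma^n)^{m+k}\to(\Sigma^n)^m$ on $Z=(x_1,\dots,x_k,y_1,\dots,y_m)$ by
$$g_i(Z)=y_i\oplus x_i\ \ (1\le i\le k),\qquad g_i(Z)=y_i\oplus f_{e_i}(X)\ \ (k<i\le m).$$
This encoding produces $nm$ output symbols, so its rate is $m$; it remains to check that every client can decode, and then to apply the lower bound.

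The decoding check goes through the five client classes exactly as in Theorem~\ref{th:equivalence}. For $R_1$, $R_2$, and $R_4$ the component $g_i(Z)$ equals the demanded word plus a side-information word, so a single $\ominus$ recovers the demand (for $R_4$ one also uses $f_{e_i}(X)=\delta(e_i)$ from (N2)). For $R_5=\{(y_i,X)\}$ the client knows all of $X$, hence can evaluate $x_i$ or $f_{e_i}(X)$ itself and subtract it from $g_i(Z)$ to obtain $y_i$. The only class that uses the structure of the network code is $R_3$: a client $(y_i,\{y_\alpha:e_\alpha\in\mathcal{P}(e_i)\})$ first computes $f_{e_\alpha}(X)=g_\alpha(Z)\ominus y_\alpha$ for each parent edge $e_\alpha$ — an identity that holds uniformly, using (N1) when $e_\alpha\in S$ — then applies the local encoding function $\phi_{e_i}$ supplied by (N3) to obtain $f_{e_i}(X)=\phi_{e_i}\bigl(f_{e_{i_1}}(X),\dots,f_{e_{i_p}}(X)\bigr)$, and finally recovers $y_i=g_i(Z)\ominus f_{e_i}(X)$. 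Hence $g$ is a valid $(n,q)$ index code for $\mathcal{I}_{\mathbb{N}}$, so $\lambda(n,q)\le m$.

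To conclude, I would combine this with the two facts already in the excerpt: the observation immediately following Definition~\ref{def:1} gives $\mu(\mathcal{I}_{\mathbb{N}})=m$, and Lemma~\ref{lem:bound} gives $\lambda(n,q)\ge\mu(\mathcal{I}_{\mathbb{N}})=m$. Therefore $\lambda(n,q)=\mu(\mathcal{I}_{\mathbb{N}})=m$, which is exactly the claim. I do not expect a genuine obstacle here; the single point of care — and the only real departure from the linear argument — is that $\phi_{e_i}$ is now an arbitrary function of $p_e$ blocks, so the decoder argument for $R_3$ must be phrased as the client \emph{evaluating} $\phi_{e_i}$ on quantities it has already reconstructed, rather than as matrix algebra, and one must ensure that the encoder and every decoder use the same fixed group operation on $\Sigma^n$.
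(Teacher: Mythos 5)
Your proposal is correct and follows exactly the route the paper intends: its proof of this lemma is just the remark that one ``slightly modifies the first part of the proof of Theorem~\ref{th:equivalence}'', and your argument is precisely that modification, replacing field operations by the group structure on $\Sigma^n$ and invoking the (possibly non-linear) local encoding functions for the $R_3$ clients, then combining $\lambda(n,q)\le m$ with Lemma~\ref{lem:bound} and $\mu(\mathcal{I}_{\mathbb{N}})=m$.
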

\begin{proof}
The proof can be obtained by slightly modifying the first part of the proof of Theorem
\ref{th:equivalence}.
\end{proof}

\section{Applications}
\subsection{Block Encoding}

Index coding, as noted in \cite{BBJK06, LS07}, is reminiscent of the zero-error
source coding with side information problem discussed by Witsenhausen  in
\cite{W76}. Two cases were studied there depending on whether the transmitter knows the side information available to the receiver or not. It was  shown  that in the former case repeated scalar encoding is optimal, i.e.\ block encoding does not provide any benefit. We will demonstrate in this section that this result does not always hold for the Index Coding problem which can be regarded as an extension of the point to point problem discussed in \cite{W76}.

Let $\mathcal{N}_1$ be the M-network introduced in \cite{EMHK03} and depicted
in Figure \ref{fig:MPappus}(a). It was shown in \cite{DFZ07} that this network
does not have a scalar linear solution, but has a vector linear
one of block length $2$. Interestingly, such a vector linear solution does
not require encoding. In fact, reference \cite{DFZ07} proves a more general theorem:
\begin{theorem}\label{th:Mnet}
The M-network has a linear network code of block length $n$ iff $n$ is even.
\end{theorem}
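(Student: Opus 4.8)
The plan is to establish the two directions of the biconditional separately, relying on the structure of the M-network $\mathcal{N}_1$ from \cite{EMHK03} and the counting/dimension arguments in \cite{EMHK03, DFZ07}. For the ``if'' direction, I would recall the explicit block-length-$2$ routing-only solution for the M-network described in \cite{DFZ07}: it assigns to each edge a pair of packets and forwards them without coding. Given any even $n = 2t$, I would partition each message's $n$ packets into $t$ blocks of size $2$ and apply the block-length-$2$ solution independently to each block; concatenating these gives a valid linear network code of block length $n$. This step is routine once the block-length-$2$ solution is in hand.

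For the ``only if'' direction, the heart of the matter is showing no linear solution exists when $n$ is odd. I would set up the linear algebra: each edge carries an $n$-dimensional vector over $\mathbb{F}_q$ obtained by applying a matrix to the source vectors, and the decoding requirements at the sinks translate into rank/invertibility conditions on certain block matrices built from the local encoding matrices along the ``bottleneck'' edges of $\mathcal{N}_1$. The M-network is specifically engineered so that these conditions force a parity constraint: two complementary sets of demands can only both be met if certain matrices have ranks that must sum to an odd total while simultaneously each being forced (by the two halves of the network's symmetry) to contribute an even amount, or more precisely, a full-rank $n\times n$ matrix must be expressible in a way that factors through an $(n{-}1)$- or $(n{+}1)$-dimensional intermediate space on each side, and balancing the dimension counts across the network yields $n \equiv 0 \pmod 2$. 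I would carry this out by: (i) identifying the minimal edge cuts separating each source from the sinks that demand it; (ii) writing the transfer matrix factorizations through these cuts; (iii) combining the invertibility requirements from all four (or however many) sink demands into a single system; and (iv) extracting the parity obstruction from a determinant or rank-sum argument.

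The main obstacle I anticipate is step (iii)--(iv): correctly bookkeeping all the simultaneous decoding constraints and isolating exactly the algebraic identity that carries the $n$-parity information, rather than a weaker bound. This is precisely the kind of argument that \cite{EMHK03} used to rule out scalar linear solutions (the $n=1$ case) and that \cite{DFZ07} refined to pin down the full even/odd dichotomy; I would follow their template closely. Since Theorem~\ref{th:Mnet} is attributed to \cite{DFZ07} in the excerpt, the honest proof here is essentially a citation together with the above reduction explaining why the block-length-$2$ solution extends to all even $n$ and why the odd case inherits the scalar obstruction; I would present it as such rather than reproving the combinatorial core of \cite{EMHK03}.
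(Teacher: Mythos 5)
The paper offers no proof of this theorem at all---it is stated as a result of \cite{DFZ07}---so your treatment (cite \cite{DFZ07}, together with the routine observation that the block-length-$2$ routing solution can be concatenated to yield a solution for every even $n$) is essentially the same as the paper's. Your sketched parity obstruction for odd $n$ is speculative and does not match the actual mechanism in \cite{DFZ07} (which forces the subspaces carried across the network's bottleneck to have dimension exactly $n/2$), but since you explicitly defer that direction to the citation, nothing is missing relative to what the paper itself provides.
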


%
%\begin{figure}[h]
%\begin{center}
%\includegraphics[scale=.5]{Mnet.eps}
% \caption{The M-network}
%\label{fig:Mnet}
%\end{center}
%\end{figure}
Next, we present another network $\mathcal{N}_2$, that we refer to as  the
\emph{non-Pappus network}, and that has the same property as the M-network. Both
of these networks will be used to construct two instances of the Index Coding
problem where vector linear  outperforms scalar linear coding.

\begin{definition}[non-Pappus Network]
Let $S_0=\{\{1,2,3\},\{1,5,7\},\{3,5,9\},\{2,4,7\} ,\{4,5,6\},\{2,6,9\},$
$\{1,6,8\},\{3,4,8\} \}$, and $S_1=\{I\subseteq \{1,2,\dots,9\};
|I|=3\}\setminus S_0$. The non-Pappus network $\mathcal{N}_2$ is obtained by
adding to the network depicted in Figure \ref{fig:MPappus}(b) a node $n_I$ for
each $I=\{i,j,k\} \in S_1$, the edges $(n_i, n_I), (n_j, n_I), (n_k, n_I)$ and
three output edges outgoing from $n_I$, each one of them demands a different $x_i$.
\end{definition}

\begin{figure}[h]
\begin{center}
\includegraphics[scale=.6]{./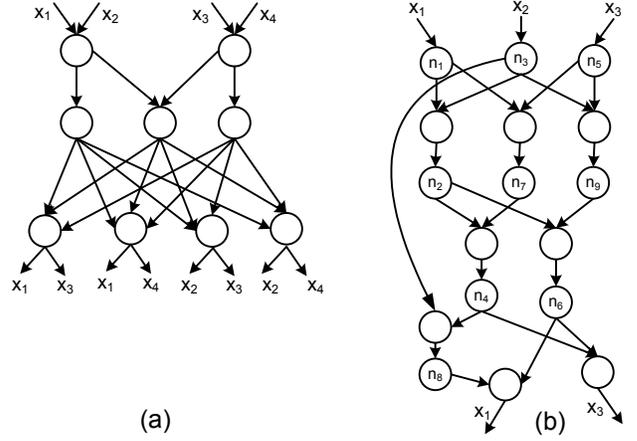}
 \caption{(a) The M-Netwrok $\mathcal{N}_1$. (b) A subnetwork of the non-Pappus
network $\mathcal{N}_2$.} \label{fig:MPappus}
\end{center}
\end{figure}

\begin{theorem}\label{th:nonPappus}
There is no  scalar linear network code for the non-Pappus network over any
field, but there is a $(2,3)$ linear one.
\end{theorem}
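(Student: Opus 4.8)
The plan is to recognise that a scalar linear network code for $\mathcal{N}_2$ over a field $\mathbb{F}$ is the same object as an $\mathbb{F}$-linear representation of the non-Pappus matroid, and then to use the classical fact that this matroid is representable over no field yet admits a ``representation'' over the noncommutative ring of $2\times 2$ matrices. In more detail, each of the nine source-side nodes $n_1,\dots,n_9$ in the subnetwork of Figure~\ref{fig:MPappus}(b) emits a bottleneck edge; write $v_i$ for its global encoding vector. Exactly as in the matroidal-network constructions of \cite{EMHK03,DFZ07}, the purpose of that subnetwork is to force each of the eight triples in $S_0$ to be linearly dependent while keeping every pair $\{v_i,v_j\}$ independent, and the purpose of each added sink $n_I$, $I=\{i,j,k\}\in S_1$, is that it must reconstruct all three source messages from $v_i,v_j,v_k$ alone, which is possible iff $\{v_i,v_j,v_k\}$ is independent. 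Hence a network code for $\mathcal{N}_2$ is precisely a representation of the rank-$3$ matroid $\mathcal{M}$ on $\{1,\dots,9\}$ whose $3$-element circuits are exactly the members of $S_0$, with the $S_1$-triples as (non-degenerate) bases.

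\textbf{No scalar linear code over any field.} Observe that $S_0$ is, after relabelling, the set of eight lines of a Pappus configuration: take $\{1,2,3\}$ and $\{4,5,6\}$ as the two base lines; then $7,8,9$ are the three cross-points (of $\{1,5\}$ with $\{2,4\}$, of $\{1,6\}$ with $\{3,4\}$, and of $\{2,6\}$ with $\{3,5\}$), and the missing ninth line is exactly $\{7,8,9\}$. If $\mathcal{N}_2$ had a scalar linear code over a field $\mathbb{F}$, the resulting vectors $v_1,\dots,v_9\in\mathbb{F}^3$ would realise all eight collinearities of $S_0$ with the nine points in general position; Pappus's theorem then forces $v_7,v_8,v_9$ to be collinear, contradicting $\{7,8,9\}\in S_1$ and the decodability requirement at $n_{\{7,8,9\}}$. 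Since $\mathbb{F}$ is arbitrary, no scalar linear code exists.

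\textbf{A $(2,3)$ linear code.} A block-length-$2$ linear code over $\mathbb{F}_3$ replaces each $v_i$ by a $2\times 6$ matrix $V_i$ over $\mathbb{F}_3$; the decoding conditions now read: the $6\times 6$ matrix obtained by stacking $V_i,V_j,V_k$ has rank $6$ for $\{i,j,k\}\in S_1$ and rank $4$ for $\{i,j,k\}\in S_0$. Because the ring $M_2(\mathbb{F}_3)$ is noncommutative, Pappus's theorem is no longer an obstruction. Following the standard coordinatisation of a Pappus configuration, I would fix the $2\times 6$ blocks attached to $1,\dots,6$ at generic positions and express the collinearity of the cross-points $7,8,9$ as a single identity of the form $ab=ba$ in two invertible matrices $a,b\in GL_2(\mathbb{F}_3)$ parametrising the construction. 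Choosing $a,b$ noncommuting with $ab-ba$ invertible (for instance the two elementary unitriangular matrices) makes $\{7,8,9\}$ independent while preserving the eight dependencies of $S_0$; a direct check that no other triple of $S_1$ becomes dependent completes the representation, and unwinding it through Figure~\ref{fig:MPappus}(b) and the definition of $\mathcal{N}_2$ produces an explicit $(2,3)$ linear network code.

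The first two steps are routine once the gadget of Figure~\ref{fig:MPappus}(b) is fixed, so the real work lies in the explicit construction: one must exhibit concrete $2\times 2$ matrices over the small field $\mathbb{F}_3$ that realise the eight collinearities of $S_0$ and at the same time keep all seventy-six triples of $S_1$ independent, with no accidental coincidences among the nine blocks. This simultaneous bookkeeping over $\mathbb{F}_3$ — not merely arranging $ab\neq ba$ — is the main obstacle.
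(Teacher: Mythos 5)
Your first half is essentially the paper's argument: you identify the global encoding vectors $v_1,\dots,v_9$ of a scalar linear code with a point configuration realising the eight dependencies of $S_0$ while the $S_1$-sinks force every triple outside $S_0$ (hence every pair) to be independent, and then invoke Pappus's theorem to force the ninth collinearity $\{7,8,9\}$, a contradiction. This is exactly the paper's route (there the conclusion is phrased as: the $a_i$ would be a linear representation of the non-Pappus matroid, which by Pappus's theorem is representable over no field), and apart from glossing over how the topology of Figure~\ref{fig:MPappus}(b) pins the ranks of the $S_0$ triples to exactly $2$, it is fine.

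The genuine gap is in the second half. The theorem asserts the \emph{existence} of a $(2,3)$ linear network code, and your proposal does not establish it: you describe a strategy (replace scalars by $2\times 2$ matrices over $\mathbb{F}_3$, coordinatise the Pappus configuration, and break the forced ninth collinearity by choosing two non-commuting matrices with invertible commutator), but you neither exhibit the nine $2$-dimensional encoding blocks nor verify that the eight $S_0$ dependencies hold while all remaining triples of $S_1$ stay of full rank $6$ over the small field $\mathbb{F}_3$ — indeed you explicitly flag this verification as the ``main obstacle'' left open. Arranging $ab\neq ba$ alone does not rule out accidental degeneracies among the other triples, and over $\mathbb{F}_3$ there is little room for ``generic'' choices, so the claim needs a concrete witness. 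The paper closes exactly this gap by writing down explicit functions $f_1,\dots,f_9:(\mathbb{F}_3^2)^3\to\mathbb{F}_3^2$ (taken from the known multilinear, or partition, representation of the non-Pappus matroid of \cite{SA98,M99}), assigning $f_e=f_i$ to the edges out of $n_i$ and $f_e=\delta(e)$ to the output edges, after which the code conditions can be checked directly. To complete your proof you would either have to produce such explicit matrices and carry out the rank bookkeeping over $\mathbb{F}_3$, or invoke the multilinear representability of the non-Pappus matroid in dimension $2$ over $\mathbb{F}_3$ and show that any such representation induces a valid $(2,3)$ code on $\mathcal{N}_2$.
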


\begin{proof}
Let $C=\{f_e; e\in \mathcal{N}_2 \}$ be a scalar linear network code for
$\mathcal{N}_2$ over a certain field $\mathbb{F}$. Without loss of generality,
we assume that for each node $n_i$ of $\mathcal{N}_2$, the functions associated
with its output edges are identical. We define then $f_i=f_e$ where $e$ is an
outgoing edge to $n_i$, $i=1,\dots,9$, and write
$f_i=a_{i1}x_1+a_{i2}x_2+a_{i3}x_3=a_i\cdot X^T$, where $X=(x_1,x_2,x_3)$ and
$a_i=(a_{i1},a_{i2},a_{i3})$.

Since $\forall I=\{i,j,k\}\in S_1$, the outgoing edges to node $n_I$ demand
$x_1$, $x_2$ and $x_3$, we have $\rank \{a_i,a_j,a_k\}=3$. Furthermore, from
the connectivity of $\mathcal{N}_2$, we deduce that $a_2$ should be a linear
combination of $a_1$ and $a_3$, giving $\rank\{a_1,a_2,a_3\}< 3$. But
$\rank\{a_1,a_2,a_4\}=3$, which implies that $\rank\{a_1,a_2,a_3\}>1$, hence
$\rank\{a_1,a_2,a_3\}=2$. Similarly, $\forall \{i,j,k\} \in S_0,
\rank\{a_i,a_j,a_k\}=2$.

Therefore, letting $A=\{a_1,a_2,\dots,a_9\}$, the matroid
$\mathcal{M}(A,\rank)$ is the non-Pappus matroid shown in Figure
\ref{fig:nonPappusMat} \cite[p.43]{OX93}. Therefore, the vectors $a_i$ form a
linear representation  of $\mathcal{M}$ over $\mathbb{F}$. But, by Pappus
theorem \cite[p.173]{OX93}, the non-Pappus matroid is not linearly
representable over any field, which leads to a contradiction. So,
$\mathcal{N}_2$ does not have  a scalar linear solution.

Let $x_1=(x,y), x_2=(w,z), x_3=(u,v) \in \mathbb{F}_3^2$. Define $f_1(X)=x_1,
f_2(X)= (x+w,  y+z), f_3(X)= x_2, f_4(X)=(x+u+2z, y+2v+w+z),
f_5(X)=x_3,f_6(X)=(x+2u+2v+2z, y+u+w+z), f_7(X)=(x+v,y+u+2v), f_8(X)=( x+u+w+z,
y +2v+w), f_9(X)=(u+w, v+z)$. These functions correspond to the multilinear (or
partition) representation of the non-Pappus matroid discussed in \cite{SA98,
M99}. For each edge $e\in G$ outgoing from node $n_i, i=1,\dots,9$, define
$f_e=f_i$. And for each edge $e\in D$, let $f_e=\delta(e)$. Then, $\{f_e;e\in
\mathcal{N}_2\}$ is a $(2,3)$ network code for the non-Pappus network.
\end{proof}

\begin{figure}[t]
\begin{center}
\includegraphics[scale=.52]{./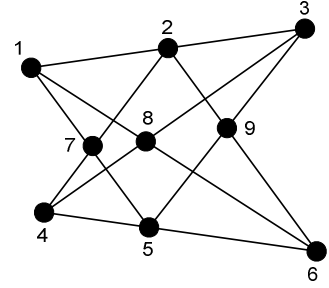}
 \caption{ A graphical representation of the non-Pappus matroid  of rank 3 \cite[p.43]{OX93}. Cycles are
 represented by straight lines.} \label{fig:nonPappusMat}
\end{center}
\end{figure}

 Now, consider  $I_{\mathcal{N}_1}$ and $I_{\mathcal{N}_2}$ the two Index Coding problems  corresponding respectively to the M-network and the non-Pappus network obtained by the construction of the
previous section. Both do not admit  scalar linear index codes  that achieve
the bound of Lemma \ref{lem:bound}, but have linear index codes of length 2,
$I_{\mathcal{N}_1}$ over $\mathbb{F}_2$ and $I_{\mathcal{N}_2}$ over
$\mathbb{F}_3$, that meet this bound. Thus, $I_{\mathcal{N}_1}$ and
$I_{\mathcal{N}_2}$ are two instances of the Index Coding problem where vector
linear coding outperforms scalar linear coding. This result can be summarized
by the following corollary:
\begin{corollary}
For $I_{\mathcal{N}_1}, \lambda^*(2,2)<\lambda^*(1,2)$. And for $I_{\mathcal{N}_2},  \lambda^*(2,3)<\lambda^*(1,3)$.
\end{corollary}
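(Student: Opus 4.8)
The plan is to prove the corollary by combining Theorem~\ref{th:equivalence}, Theorem~\ref{th:Mnet}, and Theorem~\ref{th:nonPappus} with the observation (noted just after Definition~\ref{def:1}) that $\mu(\mathcal{I}_{\mathbb{N}})=m$ for both reductions. First I would recall that by Lemma~\ref{lem:bound}, for any block length $n$ and alphabet size $q$ we have $\lambda^*(n,q)\ge\lambda(n,q)\ge\mu(\mathcal{I}_{\mathbb{N}})=m$; so "meeting the bound" with a linear code of length $n$ over $\mathbb{F}_q$ means exactly $\lambda^*(n,q)=m$, and by Theorem~\ref{th:equivalence} this happens if and only if $\mathbb{N}$ admits a linear $(n,q)$ network code.

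Next I would instantiate this equivalence for $\mathcal{N}_1$. By Theorem~\ref{th:Mnet} the M-network has a linear network code of block length $n$ over $\mathbb{F}_2$ precisely when $n$ is even, so in particular it has no $(1,2)$ linear code but does have a $(2,2)$ linear code. Translating through Theorem~\ref{th:equivalence}: $I_{\mathcal{N}_1}$ has no scalar linear index code of rate $m$ over $\mathbb{F}_2$, hence by Lemma~\ref{lem:bound} $\lambda^*(1,2)>m$, while it has a vector linear index code of block length $2$ of rate exactly $m$, so $\lambda^*(2,2)=m$. This yields $\lambda^*(2,2)=m<\lambda^*(1,2)$. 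The identical argument applied to $\mathcal{N}_2$ using Theorem~\ref{th:nonPappus} (no scalar linear network code over any field, in particular over $\mathbb{F}_3$, but a $(2,3)$ linear one) gives $\lambda^*(2,3)=m<\lambda^*(1,3)$ for $I_{\mathcal{N}_2}$.

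The one point that needs a little care — and which I expect to be the only real obstacle — is the direction of Theorem~\ref{th:equivalence} that turns the \emph{non}-existence of a scalar linear network code into a \emph{strict} inequality $\lambda^*(1,q)>m$, rather than merely $\lambda^*(1,q)\ne m$. This follows because Lemma~\ref{lem:bound} gives $\lambda^*(1,q)\ge\lambda(1,q)\ge m$ unconditionally, so once equality is ruled out the inequality must be strict; I would state this explicitly. A second minor subtlety is that Theorem~\ref{th:nonPappus} rules out scalar linear codes over \emph{every} field, which is stronger than what we need (we only need $\mathbb{F}_3$), so the corollary follows immediately. I would present the whole argument in two or three sentences per network, citing the three theorems and the lemma, with no further computation required.

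\begin{proof}
By the remark following Definition~\ref{def:1}, $\mu(\mathcal{I}_{\mathcal{N}_1})=\mu(\mathcal{I}_{\mathcal{N}_2})=m$, and by Lemma~\ref{lem:bound}, $\lambda^*(n,q)\ge\lambda(n,q)\ge m$ for all $n,q$. Hence a linear index code of block length $n$ over $\mathbb{F}_q$ meets the bound of Lemma~\ref{lem:bound} iff $\lambda^*(n,q)=m$, which by Theorem~\ref{th:equivalence} holds iff the corresponding network admits a linear $(n,q)$ network code.

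For $I_{\mathcal{N}_1}$: by Theorem~\ref{th:Mnet}, $\mathcal{N}_1$ has no linear network code of block length $1$ over $\mathbb{F}_2$, so $\lambda^*(1,2)\ne m$, and since $\lambda^*(1,2)\ge m$ we get $\lambda^*(1,2)>m$. On the other hand $\mathcal{N}_1$ has a linear network code of block length $2$ over $\mathbb{F}_2$, so $\lambda^*(2,2)=m$. Therefore $\lambda^*(2,2)<\lambda^*(1,2)$.

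For $I_{\mathcal{N}_2}$: by Theorem~\ref{th:nonPappus}, $\mathcal{N}_2$ has no scalar linear network code over any field, in particular none over $\mathbb{F}_3$, so $\lambda^*(1,3)\ne m$ and hence $\lambda^*(1,3)>m$. But $\mathcal{N}_2$ has a $(2,3)$ linear network code, so $\lambda^*(2,3)=m$, giving $\lambda^*(2,3)<\lambda^*(1,3)$.
\end{proof}
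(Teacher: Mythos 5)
Your proof is correct and follows essentially the same route as the paper, which simply cites Theorems~\ref{th:equivalence}, \ref{th:Mnet}, and \ref{th:nonPappus} (together with the discussion preceding the corollary invoking Lemma~\ref{lem:bound} and $\mu(\mathcal{I}_{\mathbb{N}})=m$); you have merely made the strict-inequality step $\lambda^*(1,q)>m$ explicit, which is a faithful elaboration rather than a different argument.
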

\begin{proof} Follows directly from Theorems \ref{th:equivalence},
\ref{th:Mnet} and \ref{th:nonPappus}.
\end{proof}

\subsection{Linearity vs. Non-Linearity}
Linearity is a desired property for any code, including index codes.  It was
conjectured in \cite{BBJK06} that binary scalar linear index codes are optimal,
meaning that $\lambda^*(1,2)=\lambda(1,2)$ for any instance of the Index Coding problem. The authors of \cite{LS07} disproved this conjecture for \emph{scalar
linear} codes by providing for any given number of messages  $k$ and field
$\mathbb{F}_q$, an instance of the Index Coding problem with a large gap
between $\lambda^*(1,q)$ and $\lambda(1,q)$.

In this section we show that \emph{vector linear} codes are insufficient for
minimizing the number of transmissions. In particular, we prove that non-linear
index codes outperform vector linear codes for any choice of field and block
length $n$. Our proof is based on the insufficiency of linear network codes
shown in \cite{DFZ05}.

First, we present the network $\mathcal{N}_3$ depicted in Figure
\ref{fig:zegernet} which was introduced and studied in \cite{DFZ05}. The
following theorem was proved in \cite{DFZ05}.

\begin{theorem}\label{theor:aux}
The network $\mathcal{N}_3$ does not have a linear solution, but has a $(2,4)$
non-linear solution.
\end{theorem}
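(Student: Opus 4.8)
The plan is to follow the strategy of \cite{DFZ05}, which builds $\mathcal{N}_3$ from two classical matroids whose representability requirements are incompatible. First I would recall the matroid-to-network construction: to a matroid $M$ on a ground set one associates a network in which the source messages index the elements of $M$ and the internal connectivity (which bottleneck edges can reconstruct which messages) encodes the rank function of $M$. The construction is arranged so that a scalar linear network code over a field $\mathbb{F}$ exists if and only if $M$ is linearly representable over $\mathbb{F}$: the coding vector carried on each edge plays the role of the column assigned to the corresponding matroid element, and the reconstruction/flow constraints force these vectors to realize exactly the independence pattern of $M$.

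The network $\mathcal{N}_3$ is then obtained by overlaying the gadget for the \emph{Fano} matroid $F_7$ and the gadget for the \emph{non-Fano} matroid $F_7^-$ on a shared set of source messages, so that any linear solution for $\mathcal{N}_3$ restricts to a linear representation of \emph{both} matroids over one and the same field. Here I would invoke the two standard facts \cite[p.173]{OX93}: $F_7$ is representable over $\mathbb{F}$ iff $\operatorname{char}\mathbb{F}=2$, while $F_7^-$ is representable over $\mathbb{F}$ iff $\operatorname{char}\mathbb{F}\neq 2$. Since no field can have characteristic both equal to and different from $2$, there is no field over which $\mathcal{N}_3$ admits a scalar linear solution.

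The more delicate half is ruling out \emph{vector} linear solutions of every block length $n$. A vector linear code of block length $n$ over $\mathbb{F}_q$ assigns an $n\times n$ transfer matrix (equivalently a subspace of an $n$-dimensional space) to each edge, so one must show that the Fano and non-Fano reconstruction constraints still produce a contradiction at the level of matrices. The key point is that the specific configurations in $F_7$ and $F_7^-$ translate into a scalar identity of the form $2\cdot I=0$ (forced by the Fano dependency) and $2\cdot I\neq 0$ (forced by the non-Fano independency) on the underlying field, independent of $n$; hence the characteristic obstruction survives the passage to block codes and no vector linear solution exists over any field. This lifting of the obstruction from scalars to matrix blocks is the step I expect to be the main obstacle, and it is exactly where the careful gadget design of \cite{DFZ05} is needed, since one must prevent the extra freedom of non-central $n\times n$ matrices from circumventing the characteristic constraint.

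Finally, for existence I would exhibit the explicit $(2,4)$ non-linear solution of \cite{DFZ05}: with block length $n=2$ over the $4$-ary alphabet, one assigns to each edge a possibly non-linear map of the source symbols chosen so that the Fano portion is satisfied by a characteristic-$2$ style combination on one coordinate while the non-Fano portion is satisfied by a separate non-linear rule on the other, decoupling the two incompatible constraints that no single linear map could simultaneously meet. I would then verify, receiver by receiver, that every demanded message is recovered from the incoming edge symbols, which is a finite (if tedious) check over the $4$-symbol alphabet and completes the proof.
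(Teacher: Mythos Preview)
The paper does not actually prove this theorem: it is quoted verbatim from \cite{DFZ05} and stated without argument, serving only as input to the subsequent corollary about $\mathcal{I}_{\mathcal{N}_3}$. So there is no proof in the paper to compare against; your proposal goes well beyond what the paper does by sketching the contents of \cite{DFZ05} itself.

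As a sketch of the Dougherty--Freiling--Zeger argument your outline is broadly on target---the Fano/non-Fano characteristic obstruction is indeed the mechanism, and you correctly flag the passage from scalar to vector linear codes as the delicate step. One small inaccuracy: in the matroid-to-network construction the source messages do not index \emph{all} elements of the matroid; rather, the sources correspond to a fixed basis (three messages for each rank-$3$ matroid here), and the remaining ground-set elements are encoded by designated bottleneck edges whose coding vectors must realize the prescribed dependencies. Also, your description of the $(2,4)$ non-linear code as handling Fano and non-Fano ``on separate coordinates'' is more heuristic than what \cite{DFZ05} actually does; the explicit code there mixes the two packets nonlinearly rather than cleanly decoupling them. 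These are refinements rather than gaps, and since the paper itself offers no proof, your proposal is strictly more informative than the original.
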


Let $\mathcal{I}_{\mathcal{N}_3}$  be an instance of the Index Coding problem
that corresponds to $\mathcal{N}_3$ constructed according to
Definition~\ref{def:1}. Theorem \ref{theor:aux} implies that
$\mathcal{I}_{\mathcal{N}_3}$  does not have a linear solution that achieves
$\mu(\mathcal{I}_{\mathcal{N}_3})$, the lower bound of Lemma \ref{lem:bound}.
However, by Lemma \ref{ref:nonlinear}, the $(2,4)$ non-linear code of
$\mathcal{N}_3$ can be used to construct a $(2,4)$ non-linear index code for
$\mathcal{I}_{\mathcal{N}_3}$ that achieves the lower bound of lemma
\ref{lem:bound}. Hence, we obtain the following result:

\begin{corollary}
For the instance $\mathcal{I}_{\mathcal{N}_3}$ of the  Index Coding problem it
holds that $\lambda(2,4)=\mu(\mathcal{I}_{\mathcal{N}_3})$. Furthermore, for
any positive integers $n$ and  $q$, it holds that $\lambda^*(n,q)<
\lambda(n,q)$.
\end{corollary}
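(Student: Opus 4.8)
Write $\mu = \mu(\mathcal{I}_{\mathcal{N}_3}) = m$ throughout, the value recorded immediately after Definition \ref{def:1}. The plan is to treat the two assertions separately, in each case transporting the relevant fact of Theorem \ref{theor:aux} across the reduction of Definition \ref{def:1}, and crucially to obtain the strict separation from the \emph{non-achievability} of $\mu$ by linear codes rather than from any a priori ordering of optima.

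For the first assertion I would combine Theorem \ref{theor:aux} with Lemma \ref{ref:nonlinear}. Theorem \ref{theor:aux} supplies a $(2,4)$ non-linear network code for $\mathcal{N}_3$, and Lemma \ref{ref:nonlinear} turns any $(n,q)$ network code into an $(n,q)$ index code for $\mathcal{I}_{\mathcal{N}_3}$ of rate exactly $\mu$; evaluated at $(n,q)=(2,4)$ this already gives $\lambda(2,4)=\mu$, the matching lower bound $\lambda(2,4)\geq\mu$ being Lemma \ref{lem:bound}.

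For the second assertion, fix arbitrary positive integers $n$ and $q$. The counting argument underlying Lemma \ref{lem:bound} bounds the rate of \emph{every} index code below by $\mu$, linear codes included, so $\lambda^*(n,q)\geq\mu$. To exclude equality I would use the contrapositive of the ``only if'' half of Theorem \ref{th:equivalence}: a rate-$\mu$ linear $(n,q)$ index code for $\mathcal{I}_{\mathcal{N}_3}$ would yield a linear $(n,q)$ network code for $\mathcal{N}_3$, which Theorem \ref{theor:aux} forbids over every field and every block length. Hence $\lambda^*(n,q)\neq\mu$, and with $\lambda^*(n,q)\geq\mu$ this forces $\lambda^*(n,q)>\mu$. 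Combining with the first assertion, $\lambda(2,4)=\mu<\lambda^*(n,q)$ for all $n$ and $q$: the block-length-two non-linear code strictly outperforms every linear code, which is precisely the separation between linear and non-linear index coding claimed by the corollary.

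The step demanding the most care is the second one, where strictness must be extracted from non-achievability through the contrapositive of Theorem \ref{th:equivalence}, and \emph{not} from the tautological inequality between the linear and unrestricted optima; likewise the bound $\lambda^*(n,q)\geq\mu$ should be read directly off the counting argument of Lemma \ref{lem:bound} applied to a linear code, so that no appeal to $\lambda^*(n,q)\geq\lambda(n,q)$ enters the argument. I must also ensure Theorem \ref{theor:aux} and Theorem \ref{th:equivalence} are both used at the generality ``for every $(n,q)$,'' since the gap is asserted uniformly in block length and field --- legitimate exactly because Theorem \ref{theor:aux} rules out linear solutions for all block lengths and all fields at once.
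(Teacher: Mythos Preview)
Your proposal is correct and mirrors the paper's own argument: the paragraph preceding the corollary invokes Theorem~\ref{theor:aux} together with Theorem~\ref{th:equivalence} to rule out any linear index code for $\mathcal{I}_{\mathcal{N}_3}$ achieving $\mu$, and Theorem~\ref{theor:aux} together with Lemma~\ref{ref:nonlinear} to exhibit a $(2,4)$ non-linear index code achieving $\mu$, with Lemma~\ref{lem:bound} supplying the matching lower bound. Your only addition is to spell out explicitly the contrapositive use of Theorem~\ref{th:equivalence} and the applicability of Lemma~\ref{lem:bound} to linear codes, both of which the paper leaves implicit.
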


\begin{figure}[t]
\begin{center}
\includegraphics[scale=.5]{./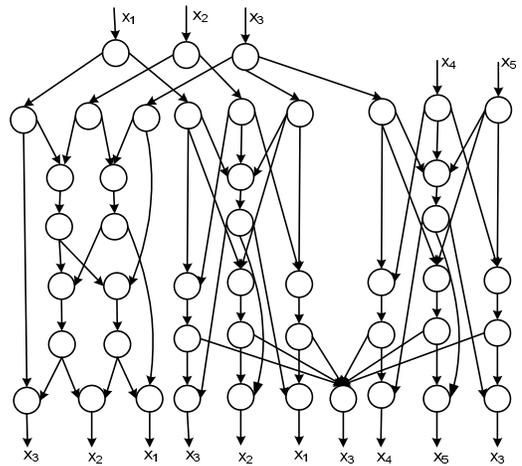}
 \caption{The network $\mathcal{N}_3$ of \cite{DFZ05}. $\mathcal{N}_3$ does not have
 a linear network code over any field, but has a non-linear one over a quaternary alphabet. } \label{fig:zegernet}

\end{center}
\end{figure}

\section{Conclusion}
In this paper we studied the connection between the Index Coding and Network
Coding problems. We showed a reduction that maps each communication network
$\mathbb{N}$ to an instance of the Index Coding problem
$\mathcal{I}_\mathbb{N}$ such that $\mathbb{N}$ has a linear network code if
and only if $\mathcal{I}_\mathbb{N}$ has a linear index code over the same
field that satisfies a certain condition on the number of transmissions.

This reduction allowed us to apply many important results for network coding to index coding. For instance, we introduced the non-Pappus network and showed
that it does not have a scalar linear network code, but has a vector linear
one. The non-Pappus network in addition to the M-network of \cite{EMHK03} were
used to construct index coding instances where vector linear solutions
outperform scalar linear solutions. Another application of this reduction
concerns the comparison of linear and non-linear index codes. Using the results
of Dougherty et al. in \cite{DFZ05} we proved the insufficiency of vector
linear solutions for the Index Coding problem.

\bibliographystyle{unsrt}

\bibliography{coding}

\begin{thebibliography}{10}

\bibitem{ACLY00}
R.~Ahlswede, N.~Cai, S.-Y.~R. Li, and R.~W. Yeung.
\newblock Network {I}nformation {F}low.
\newblock {\em IEEE Transactions on Information Theory}, 46(4):1204--1216,
  2000.

\bibitem{FS07}
C.~Fragouli and E.~Soljanin.
\newblock {\em Network Coding Fundamentals (Foundations and Trends in
  Networking)}.
\newblock Now Publishers Inc, 2007.

\bibitem{YLC06}
R.~Yeung, S-Y. Li, and N.~Cai.
\newblock {\em Network Coding Theory (Foundations and Trends in Communications
  and Information Theory)}.
\newblock Now Publishers Inc, 2006.

\bibitem{BBJK06}
Z.~Bar-Yossef, Y.~Birk, T.~S. Jayram, and T.~Kol.
\newblock Index coding with side information.
\newblock In {\em Proc. of the 47th Annual IEEE Symposium on Foundations of
  Computer Science (FOCS)}, 2006.

\bibitem{WPCPC061}
Y.~Wu, J.~Padhye, R.~Chandra, V.~Padmanabhan, and P.~A. Chou.
\newblock The local mixing problem.
\newblock In {\em Proc. Information Theory and Applications Workshop}, San
  Diego, Feb. 2006.

\bibitem{RCS07}
S.~El Rouayheb, M.A.R. Chaudhry, and A.~Sprintson.
\newblock On the minimum number of transmissions in single-hop wireless coding
  networks.
\newblock In {\em IEEE Information Theory Workshop (Lake Tahoe)}, 2007.

\bibitem{LS07}
E.~Lubetzky and U.~Stav.
\newblock Non-linear index coding outperforming the linear optimum.
\newblock In {\em Proc. of the 48th Annual IEEE Symposium on Foundations of
  Computer Science (FOCS)}, pages 161--167, 2007.

\bibitem{CS08}
M.~A.~R. Chaudhry and A.~Sprintson.
\newblock Efficient algorithms for index coding.
\newblock 2008.

\bibitem{DFZ05}
R.~Dougherty, C.~Freiling, and K.~Zeger.
\newblock Insufficiency of linear coding in network information flow.
\newblock {\em IEEE Transactions on Information Theory}, 51(8):2745--2759,
  2005.

\bibitem{W76}
D.J.A. Welsh.
\newblock {\em Matroid Theory}.
\newblock Academic Press, London, London, 1976.

\bibitem{EMHK03}
M.~Effros, M.~Medard, T.~Ho, and D.~Karger.
\newblock On coding for non-multicast networks.
\newblock In {\em Proceedings of the IEEE International Symposium on
  Information Theory}, 2003.

\bibitem{DFZ07}
R.~Dougherty, C.~Freiling, and K.~Zeger.
\newblock Networks, matroids, and non-shannon information inequalities.
\newblock {\em IEEE Transactions on Information Theory}, 53(6), June 2007.

\bibitem{OX93}
J.~G. Oxley.
\newblock {\em Matroid Theory}.
\newblock {Oxford University Press, USA}, New York, NY, USA, January 1993.

\bibitem{SA98}
J.~Simonis and A.~Ashikhmin.
\newblock Almost affine codes.
\newblock {\em Desings, Codes and Cryptography}, 14:179--797, 1998.

\bibitem{M99}
F.~Mat\'{u}s.
\newblock Matroid representations by partitions.
\newblock {\em Discrete Mathematics}, 203:169--194, 1999.

\end{thebibliography}

\end{document}